\newcommand{\ubox}[1]{\fbox{\parbox[c][1cm][c]{3cm}{\centering #1}}}
\definecolor{Darkblue}{rgb}{0,0,0.4}
\definecolor{Brown}{cmyk}{0,0.81,1.,0.60}
\definecolor{Purple}{cmyk}{0.45,0.86,0,0}
\newcommand{\lref}[2][]{\hyperref[#2]{#1~\ref*{#2}}}
\newtheorem{theorem}{Theorem} 
\newtheorem{lemma}{Lemma} 
\newtheorem{Definition}{Definition}
\newcommand{\NN}{\ensuremath{\mathbb{N}}}
\newcommand{\Pro}[1]{\mbox{\rm\bf Pr}\left[#1\right]}
\newcommand{\classP}{{\sf P}}
\newcommand{\squote}[1]{`#1'\xspace}
\newcommand{\dquote}[1]{``#1''\xspace}
  \newcommand{\condtwocol}[1]{#1}
  \newcommand{\condtwocol}[1]{}
\author{Rebecca Reiffenh\"auser\thanks{Sapienza University of Rome -- rebeccar@diag.uniroma1.it }}
\title{An Optimal Truthful Mechanism\\ for the Online Weighted Bipartite Matching Problem\thanks{I dedicate this work to my advisor Prof. Berthold Vöcking, who posed the problem to me and tragically died in 2014. He was convinced a constant-competitive truthful mechanism must exist.}~\thanks{Partially supported by the ERC Advanced Grant 788893 AMDROMA \dquote{Algorithmic and Mechanism Design Research in Online MArkets}.}}
\date{}
\begin{document}
\selectlanguage{english}
\maketitle

\begin{abstract}
In the weighted bipartite matching problem, the goal is to find a maximum-weight matching in a bipartite graph with nonnegative edge weights. We consider its online version where the first vertex set is known beforehand, but vertices of the second set appear one after another. Vertices of the first set are interpreted as items, and those of the second set as bidders. On arrival, each bidder vertex reveals the weights of all adjacent edges and the algorithm has to decide which of those to add to the matching. 
We introduce an optimal, $e$-competitive truthful mechanism under the assumption that bidders arrive in random order (secretary model). 

It has been shown that the upper and lower bound of $e$ for the original secretary problem extends to various other problems even with rich combinatorial structure, one of them being weighted bipartite matching.
But truthful mechanisms so far fall short of reasonable competitive ratios once respective algorithms deviate from the original, simple threshold form. 
The best known mechanism for weighted bipartite matching by Krysta and V\"ocking~\cite{KrystaV12} offers only a ratio logarithmic in the number of online vertices. We close this gap, showing that truthfulness does not impose any additional bounds. 
The proof technique is new in this surrounding, and based on the observation of an independency inherent to the mechanism. The insights provided hereby are interesting in their own right and appear to offer promising tools for other problems, with or without truthfulness.

\end{abstract}

\section{Introduction}
We consider an auction setting where, while the set of items is known, bidders arrive in an online fashion. 
Assignments are restricted to a one-on-one manner where every bidder can be assigned at most one item, and each item can be assigned at most once. 
Each bidder has nonnegative valuations for each of the items as private information, and reveals those on arrival.
The algorithm then has to make a definitive decision which item to assign the arriving bidder, and cannot revoke its choice later. 
The goal is to find a one-on-one assignment, or \emph{matching} of items to bidders such that the social welfare, i.e. the sum of all valuations in the assignment, is maximized.

We assume the number of bidders to be known beforehand, and their order of arrival to be uniformly at random. This assumption is not only reasonable since real-world bidders are unlikely to conspire about when to place their bids. It is also necessary in order to give any guarantee on the quality of our outcome: When both valuations and the arrival order are arbitrary, an adversary can always render any online algorithm useless by choosing the very last bidder's valuations to be extremely high or low (see~\cite{AggarwalGKM11}).
The emphasis in this paper lies on the fact that valuations are the private information of the bidders, who act as selfish agents and might misreport them strategically. This makes the problem one of mechanism design, and we strive to make truthtelling every bidder's best interest by charging according \emph{prices} accompanying the assignments themselves.
The underlying weighted bipartite matching problem is a classic setting from graph theory, and one of the fundamental problems in operations research where it is called simply the \emph{assignment problem}. Applications for its game-theoretic online version with selfish bidders are vast and lie mainly in e-commerce and online auctions, e.g. for advertising.

The online problem considered is an extension of the well-known \emph{secretary problem}, where a known number of applicants for a position arrive one after another and the goal is to give the position to the best candidate. The best online algorithm for the secretary problem has an expected competitive ratio\footnote{The competitive ratio denotes the (expected) ratio between an online algorithm's outcome and the value of an offline optimum.} of $e$, which is also the lower bound (see, e.g.,~\cite{Dynkin63}).
While this guarantee has been shown to transfer to other online problems even with rich combinatorial structure, like weighted bipartite matching or combinatorial auctions with submodular valuations~\cite{KesselheimRTV13}, the situation is less clear for the strategic setting.
Here, truthful mechanisms can be derived easily for the original secretary problem and variations where algorithms still follow the same threshold principle. But when other algorithmic approaches are required, guarantees for truthful mechanisms are nowhere near the lower bound of $e$. For weighted bipartite matching, this gap is as large as an upper bound of $O(\log n)$~\cite{KrystaV12}, i.e. logarithmic in the number of bidders.

The difficulty one has to overcome in closing this gap lies less in finding a promising mechanism, and more in identifying appropriate ways to analyze its performance. We introduce a new technique for this, which is based on the observation of a simple, yet powerful independency property inherent to the mechanism. For online weighted bipartite matching, this shows that indeed, truthfulness poses no additional restriction on the competitive ratio. For other, related problems, we believe that (a weaker form of) the same property could serve as a useful guideline to finding near-optimal algorithms with or without truthfulness.

\section{Related Work}

In the adversarial arrival model, the first to research online bipartite matching were Karp et al.~\cite{KarpVV90}. They introduced a randomized algorithm that has expected competitive ratio $e/(e-1)$ (for a simple proof, see~\cite{BirnbaumM08}) and also showed a matching lower bound. This lower bound, however, does not hold in the random arrival model, which was shown by Karande et al.~\cite{KarandeMT11} and Mahdian and Yan \cite{MahdianY11}.
Online matching with random arrival order was first studied in ~\cite{GoelM08}, who analyzed a greedy approach in a budgeted setting. 

For (restricted) weighted versions of the problem, Aggarwal et al. \cite{AggarwalGKM11} first investigated online matching with adversarial arrival order. In the case that all edges incident to the same offline vertex have the same weight, they obtained an expected competitive ratio of $e/(e-1)$. Under the assumption that the edge weights represent a metric space, a 3-competitive algorithm was given by Kalyanasundaram and Pruhs ~\cite{KalyanasundaramP93}.

Our problem, online weighted bipartite matching with random arrival order, is a generalization of the matroid secretary problem on transversal matroids\footnote{This problem can be viewed as the special case of weighted bipartite matching where all edges incident to the same online vertex have the same weight, or alternatively, weights are on online vertices instead of edges.}.
This closely related problem was introduced in~\cite{BabaioffIK07} by Babaioff et al., who presented an approach that is constant-competitive for bounded left degree, later improved upon by Dimitrov and Plaxton\cite{DimitrovP12}. 
These results are part of a line of work that is central to the field, and evolves around the famous conjecture stated by Babaioff, Immorlica and Kleinberg~\cite{BabaioffIK07} in 2007, saying that a $O(1)$-approximation exists for the secretary problem on general matroids.
The online weighted bipartite matching problem itself was first tackled in a 2009 paper by Korula and P\'{a}l~\cite{KorulaP09}, who gave an 8-competitive algorithm based on a combination of sampling and a greedy approach. Kesselheim et al.~\cite{KesselheimRTV13} introduced an $e$-competitive algorithm later, matching the secretary lower bound. 
They also presented an extension of this result to online combinatorial auctions with submodular valuation functions.

The term \emph{online mechanism design} for strategic settings was initially coined by Friedman and Parkes~\cite{FriedmanP03} in 2003. 
In later publications, Parkes et al. analyzed the relation of such online mechanisms to Markov decision processes~\cite{ParkesS04}, and gave alternative approximative strategies for the case that the MDP's size becomes impractical\cite{ParkesSY05}. They, as much of the work in the field, assume a sequential decision model where agents have arrival (and departure) times, and their valuations are sampled from a probability distribution. Online mechanism design has since been combined with the assumption of randomized arrivals and secretary problems. For example, Hajiaghayi et al.~\cite{HajiaghayiKP04} and in the following, Kleinberg~\cite{Kleinberg05} considered the case of auctioning $k$ identical goods to agents, while Gershkov and Moldovanu~\cite{GershkovM10} studied the case of heterogeneous items in a deadline model. Work in this area often focuses on online models quite different from the one in our work, featuring discrete time periods or expiring goods (see, e.g.~\cite{GerdingJPRRS13},~\cite{LaviN05}).
The best known online mechanism for weighted bipartite matching in the secretary model was given by Krysta and Vöcking \cite{KrystaV12} in 2012, who presented online mechanisms for several combinatorial auction problems. Their truthful mechanism for weighted bipartite matching has competitive ratio logarithmic in the number of bidders.

\section{Notation and Preliminaries}

\paragraph{Problem Definition}
In the weighted bipartite matching problem, we are given a graph $G= (V, E)$ with $V= I\cup J$ and $E\subseteq I\times J$. All edges $e\in E$ come with a nonnegative weight $w(e)$, and the goal is to find a maximum-weight matching $M$.
A matching $M$ in $G$ is a set $M\subseteq E$ such that for all $e, e'\in M$, $e=(i, j)$ and $e' = (i', j')$, we have $i\neq i'$, $j\neq j'$.
The offline problem can be solved in polynomial time, for example via the famous \emph{Hungarian method} introduced in \cite{Kuhn55}. 
We define the according \emph{online} problem as follows:
While the vertex set $J$ and size $n=|I|$ of vertex set $I$ are known from the beginning, the vertices $i\in I$ themselves arrive one by one according to an order $\pi$, which is chosen uniformly at random.
On arrival, each $i\in I$ reveals the weights $w(i,j)$, $j\in J$ of all incident edges and the algorithm has to decide irrevocably which of those to add to the matching (or not to pick any).\\
From now on, we denote a maximum-weight matching in the induced subgraph on vertex sets $I'\subseteq I$ and $J'\subseteq J$ as $M_{OPT}(I',J')$, and the sum of all edge weights in such a matching as $OPT(I',J')$.

\paragraph{Strategic Setting}
We assume the vertices in $I$ to be the bidders, and those in $J$ to be the items.
The weight of some edge $e=(i,j)$ then corresponds to the private valuation bidder $i\in I$ has for an item $j\in J$. 
We strive to maximize the sum over all valuations of edges in our matching, i.e. the social welfare $\sum_{e\in M} w(e)$.
Bidders act as selfish agents and the valuations they have on the items (i.e., the edge weights) are their private information. In this strategic setting, we want to find a mechanism, i.e. an algorithm together with a pricing scheme for the items, with a specific property: A bidder might choose to misreport the edge weights when it helps him maximize his own gain or \emph{utility}, defined as the difference between the bidder's valuation of the item assigned to him and the price he has to pay for it.
To make sure this does not happen, we want to provide a \emph{truthful} mechanism. A mechanism is \emph{truthful} if reporting their true valuations is a dominant strategy for the bidders, i.e. misreporting a bidder's valuations never increases his utility as long as the rest of the problem instance remains the same.

\paragraph{Offline Mechanism}
Note that since the problem is in $\classP$, one can efficiently compute an offline optimum and use the famous Vickrey-Clarke-Groves mechanism \cite{Vickrey61, Clarke71, Groves73}. The principle here is as follows: For an optimal assignment $M_{OPT}$ and each item $j\in J$, charge the winning bidder $i$ with $(i,j)\in M_{OPT}$ a price $p(i,j)$ according to his \emph{externality}. The externality for bidder $i$ and item $j$ is defined as the loss in social welfare for the other bidders, caused by assigning $j$ to $i$. Formally,
\[ p(i,j)= OPT(I\setminus\{i\}, J)-OPT(I\setminus\{i\}, J\setminus\{j\}) \enspace.\] 

This ingenious truthful mechanism is based on the observation that for above prices, the interests of bidders and mechanism designer almost magically align: Bidders strive to maximize their utility $w(i,j)-p(i,j)$, depicting how much more they value an item compared to what they pay for it. The designer is interested in maximizing the weight of a possible edge incident to $i$, together with the weight of an optimal matching in the rest of the graph -- which also amounts to assigning $i$ a utility-maximizing item under above prices.
Our mechanism for the online problem will make ample use of such VCG pricings.

\paragraph{Online Mechanisms}

In the online setting, algorithmic difficulty arises from the obvious lack of information on future edge weights together with the irrevocability of the mechanism's decisions. For the non-strategic online problem, the algorithm of Kesselheim et al.\cite{KesselheimRTV13} handles this difficulty optimally, which is clearly a very good point for us to start. 
We try to combine a variation of their strategy with the requirement of truthfulness, and derive a mechanism that is applicable for the case of selfish bidders.
Truthful mechanisms with an optimal competitive ratio of $e$ do already exist for the original secretary problem, as well as a number of variations such as multiple-choice secretary. The reason is simple: For all of these problems, according algorithms follow the same principle. First, they exploit the randomness in the arrival order and observe a constant fraction of the problem instance, in expectation containing a good \emph{sample} of the values to come. Then, they use the gathered information to determine a \emph{threshold} and choose the next arriving candidate if and only if it beats that value.
This type of algorithm can be turned into a truthful mechanism by simply charging the threshold values as prices.

For weighted bipartite matching (and other problems), bidders' valuations cannot be represented by a single value and feasible sets of assignments/edges are nontrivial in structure. Therefore, algorithms of above type are quite obviously insufficient. 
Still, ensuring truthfulness is easy in principle, and can be done simply by determining fixed item prices \emph{before} an arriving bidder reveals his valuations. Then, as long as the bidder is assigned a utility-maximizing item, he will have no incentive to lie\footnote{Note that both~\cite{KorulaP09} and~\cite{KesselheimRTV13} fail to be truthful at precisely this point: Informally, the former algorithm relies on assigning an item with highest \emph{valuation}, the latter assigns no item at all if the \squote{correct} one is no longer available.}.
The real question is how to still guarantee a good competitive ratio (and it was so far not clear whether this is possible for reasonable, e.g. constant, ratios).

\section{Our Results}
We provide an optimal $e$-competitive, truthful mechanism for the online weighted bipartite matching problem as stated above, closing the gap between the known lower bound of $e$ and upper bound of $O(\log n)$~\cite{KrystaV12}. 
Bipartite weighted matching covers as special cases a number of other problems. In consequence, our mechanism is also applicable to the multiple-choice secretary problem, and the secretary problem on transversal matroids.
\begin{theorem}\label{mainthm} There exists a truthful mechanism for the online weighted bipartite matching problem with random arrival order that has expected competitive ratio of $e$.
\end{theorem}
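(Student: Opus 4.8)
The plan is to graft a Vickrey--Clarke--Groves pricing rule onto a \emph{re-optimizing} variant of the algorithm of Kesselheim et al., and then to analyze the resulting mechanism through an independence property that governs how the set of still-available items evolves.

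\textbf{The mechanism.} First I would let the mechanism sample the first $\lfloor n/e\rfloor$ bidders, assigning them nothing but recording their reported weights. For every later arrival $i$ at position $\ell$, let $I_{\ell-1}$ be the bidders seen so far and $A\subseteq J$ the set of items not yet handed out. \emph{Before} $i$ reveals anything, post for each $j\in A$ the externality price $p_\ell(j)=OPT(I_{\ell-1},A)-OPT(I_{\ell-1},A\setminus\{j\})$, computed offline by the Hungarian method on the recorded data, and then assign $i$ the item maximizing its utility $w(i,j)-p_\ell(j)$ over $j\in A$ (assigning nothing if this maximum is negative). Since every $p_\ell(j)$ depends only on $I_{\ell-1}$ and on $A$, both fixed before $i$ speaks, the menu of (item, price) pairs offered to $i$ is independent of $i$'s report; truthful reporting therefore makes the mechanism select $i$'s genuinely utility-maximizing entry from a fixed menu, and since $i$ is assigned exactly once with its utility sealed at that moment, nothing it does can feed back through later bidders to raise its own utility. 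This yields dominant-strategy truthfulness in one short argument.

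\textbf{Reduction to matchings on available items.} The identity driving the analysis is that maximizing $w(i,j)-p_\ell(j)$ over $A$ is the same as maximizing $w(i,j)+OPT(I_{\ell-1},A\setminus\{j\})$, whose optimum (including the option of leaving $i$ unmatched) is exactly $OPT(I_\ell,A)$. Hence the item given to $i$ is precisely the item matched to $i$ in $M_{OPT}(I_\ell,A)$, and the value collected at step $\ell$ is $v_\ell=w(e_\ell)$ for the edge $e_\ell$ of $i$ in $M_{OPT}(I_\ell,A_\ell)$, where $A_\ell$ is the available set entering step $\ell$. This is the truthful counterpart of the KRTV step, with the crucial difference that the matching is recomputed on the \emph{shrinking} set $A_\ell$ rather than on all of $J$; that re-optimization is what couples the current choice to past decisions and prevents any pointwise domination of KRTV, so a genuinely global argument is needed.

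\textbf{The competitive bound and the main obstacle.} The goal is the usual secretary telescoping: if one can establish the per-step identity $\Ex{v_\ell\mid I_\ell}=\tfrac1\ell\,\Ex{OPT(I_\ell,A_\ell)\mid I_\ell}$, then lower-bounding $OPT(I_\ell,A_\ell)\ge\sum_{(i',j')\in M_{OPT}(I_\ell,J)}w(i',j')\,\mathbf{1}[j'\in A_\ell]$, using that a fixed optimal edge's item survives to step $\ell$ with probability at least $\lfloor n/e\rfloor/(\ell-1)$, and using the sampling bound $\Ex{OPT(I_\ell,J)}\ge\tfrac\ell n\,OPT$, gives $\Ex{v_\ell}\ge\frac{\lfloor n/e\rfloor}{(\ell-1)n}OPT$; summing over $\ell$ yields roughly $\frac{\lfloor n/e\rfloor}{n}\ln\frac{n}{\lfloor n/e\rfloor}\,OPT$, which the threshold $n/e$ optimizes to $OPT/e$. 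The first identity is the crux and the expected main obstacle: conditioned on the set $I_\ell$ the last bidder is uniform, but $A_\ell$ is produced by the mechanism's own availability-sensitive choices on the first $\ell-1$ bidders and is thereby entangled with which bidder is held out as the $\ell$-th arrival. The entire argument hinges on an \emph{independence property} disentangling these, showing that in the relevant averaged sense the distribution of $A_\ell$ does not bias the identity of the last bidder, so that the clean ``a uniformly random last vertex earns a $1/\ell$ share of the optimum'' step survives the re-optimization. Proving that property rigorously---rather than the pricing or the summation---is where I expect the real work to lie.
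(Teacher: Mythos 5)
Your mechanism, truthfulness argument, and overall accounting are exactly those of the paper: VCG externality prices computed on the shrinking set of available items and posted before each bidder speaks, the observation that the utility-maximizing purchase coincides with the bidder's edge in $M_{OPT}(I_\ell,A_\ell)$, and the telescoping sum $\sum_{\ell>k}\frac{k}{(\ell-1)n}\,OPT$ with $k=\lfloor n/e\rfloor$. However, you have explicitly deferred the one step that constitutes the actual content of the result: the independence property that (a) lets you treat $A_\ell$ as a uniformly random $s$-subset of $J$, independent of $I_\ell$, so that the folklore bound $\Ex{OPT(I_\ell,A_\ell)}\ge\frac{\ell}{n}\frac{s}{m}\,OPT$ applies, and (b) yields the per-item survival probability $k/(\ell-1)$. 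Asserting that a fixed optimal item \dquote{survives to step $\ell$ with probability at least $\lfloor n/e\rfloor/(\ell-1)$} is precisely assuming the conclusion of the missing lemma; without it nothing in your chain of inequalities is justified, because $A_\ell$ is generated by availability-sensitive re-optimization and could a priori be badly correlated both with $I_\ell$ and with which bidder is held out as the $\ell$-th arrival.

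The paper closes this gap with a purely combinatorial induction over a table whose columns are the $n!$ arrival orders and whose rows are the steps: every state with $t$ arrived bidders and $s$ available items has exactly $t-s$ predecessor states in which no item was assigned and $m-s$ predecessor states in which one was, each contributing a $\frac{1}{n-(t-1)}$-fraction of its (inductively equal) multiplicity, so all states in $\mathcal{S}(t,s)$ occur equally often. Crucially, this argument needs three hypotheses you never state: the graph is complete (free disposal), the sample length satisfies $k\ge m$ (enforced by padding with dummy bidders) so that every partial optimum matches all available items, and consistent tie-breaking makes each partial optimum unique so that each step is a well-defined function of the current state and the new arrival. The paper notes explicitly that dropping either of the first two can break the lemma, since the same items may then remain unassigned for most orders. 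Your proposal is the right skeleton, but the theorem is not proved until this independence lemma --- and the assumptions that make it true --- are supplied.
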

On the algorithmic side, the mechanism is fairly simple and bases its assignment decisions on the computation of local optima in the known part of the graph. 
The competitive analysis, however, is less straightforward. We build it on the observation of an independency inherent to the mechanism, which together with the assignment rule yields the optimal ratio of $e$. The independency seems surprising at first, but in hindsight, this property of the mechanism is not a lucky coincidence. Much more, it represents very reasonable design guidelines that could prove useful for other problems in the same scope.
More exactly, we prove that in expectation over possible arrival orders $\pi$, the development of the set of yet unassigned items does not depend on that of the set of known bidders. It holds:

\begin{lemma}\label{informalindep}
Under the condition that $t$ bidders have arrived and $s$ items are unassigned, the sets of arrived bidders and unassigned items are independent random variables.
\end{lemma}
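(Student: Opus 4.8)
I would fix the instance $G=(I\cup J,E)$ and treat the only source of randomness as the uniform arrival order $\pi$. Let $B_t\subseteq I$ denote the set of the first $t$ arrivals and $U_t\subseteq J$ the set of items still unassigned after the mechanism has processed those $t$ bidders; note that $U_t$ is a deterministic function of the \emph{ordered} prefix $(\pi_1,\dots,\pi_t)$, whereas $B_t$ is the corresponding unordered set. By symmetry of the uniform permutation, the marginal law of $B_t$ is uniform over all $t$-subsets of $I$. Writing $h_b(u)=\Pr[\,U_t=u \mid B_t=b\,]$, the conditional-on-size independence claimed in the lemma reduces to showing that, for each fixed $u$ with $|u|=s$, the quantity $h_b(u)$ does not depend on the particular set $b$ of arrived bidders; combined with the uniform marginal of $B_t$ this yields $\Pr[B_t=b,\,U_t=u]=\Pr[B_t=b]\Pr[U_t=u]$ at every size $s$. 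Equivalently, and more usefully for what follows, it suffices to exhibit $U_0,U_1,\dots$ as a \emph{symmetric ``death chain''} on subsets of $J$ whose one-step transition law ignores the identities of the arrived bidders, so that conditioned on $|U_t|=s$ the set $U_t$ is uniform over $s$-subsets and factorizes from $B_t$.

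\textbf{Induction on $t$ via the random-last-arrival symmetry.} The plan is to prove the one-step invariance by induction, assuming after $t-1$ steps that $B_{t-1}$ and $U_{t-1}$ are independent given their sizes. The engine is the standard fact that, conditioned on the unordered set $B_t$, the most recently arrived bidder is uniformly distributed over $B_t$ and independent of the internal order of the earlier arrivals. At step $t$ the mechanism bases its decision on a local optimum in the known graph: the newest bidder is assigned the item it is matched to in $M_{OPT}(B_t,\,U_{t-1})$ (the optimal matching of the arrived bidders to the \emph{currently available} items), priced by its VCG externality so that truthfulness is preserved; the assigned item is then removed from $U$. The core claim I must establish is that, after averaging over the uniformly random \emph{identity} of the newest bidder, the item removed from $U_{t-1}$ is \emph{exchangeable} over $U_{t-1}$: its conditional distribution depends only on $U_{t-1}$ and on the counts $t,s$, and not on which bidders make up $B_{t-1}$. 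Feeding this one-step invariance into the inductive hypothesis immediately propagates independence from $(B_{t-1},U_{t-1})$ to $(B_t,U_t)$ at each resulting size.

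\textbf{The main obstacle.} The difficulty, and the heart of the proof, is exactly this decoupling: the assignment is driven by the optimal matching $M_{OPT}(B_t,U_{t-1})$, which genuinely depends on the arrived bidder set through the edge weights, so symmetry over the items is not automatic. I would attack it through the exchange structure of optimal matchings together with the random-last-arrival trick. Concretely, for fixed $B_t$ and $U_{t-1}$ the optimum matches a well-defined set of items in $U_{t-1}$, and since the newest bidder is a uniformly random element of $B_t$, the removed item is a uniformly random element of that matched-item set (or nothing); the task is to show that averaging this over the arrived-bidder structure we are conditioning away produces a law on $U_{t-1}$ that is invariant under relabeling bidders, so the bidder identities drop out. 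I expect the real work to lie in (i) making this exchangeability precise while handling ties and the uniqueness of local optima, and (ii) correctly bookkeeping the branching between ``size stays at $s$'' and ``size drops to $s-1$'', so that the invariance holds simultaneously for every target size. Once the one-step invariance is secured, the reduction in the first paragraph and the induction close the argument.
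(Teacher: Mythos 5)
Your architecture is the right one and matches the paper's in substance: the paper also proves the statement by induction over the steps, tabulating all $n!$ orders and counting, for each target state $S=(I_t,J_s)$, how many predecessor states of each size class can transition into it (exactly $t-s$ predecessors with $s$ items, one per bidder unmatched in $M_t$, and $m-s$ predecessors with $s+1$ items, one per already-assigned item), each contributing a $\frac{1}{n-(t-1)}$ fraction of its multiplicity. Your ``random last arrival'' formulation is this same count in probabilistic language. However, your proposal stops exactly where the proof has to be done: you name the exchangeability of the removed item as ``the heart of the proof'' and ``the real work'' and then defer it, so as written there is no proof of the one claim that is not routine.

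More importantly, the ingredient that makes that claim true is absent from your outline. The paper assumes the graph is complete (free disposal) and that the sampling phase has length $k\geq m$, which together force $|M_t|=|J_U|$ in every step: \emph{every} currently available item is matched in every partial optimum. Only then is the removed item, conditioned on a removal occurring, uniform over \emph{all} of $U_{t-1}$, and only then is the number of unmatched arrived bidders exactly $t-s$ independently of their identities. Your phrasing ``the removed item is a uniformly random element of that matched-item set'' concedes that the matched-item set might be a proper, weight-dependent subset of $U_{t-1}$; in that case the one-step law genuinely depends on which bidders have arrived and the lemma is false --- the paper explicitly remarks that for $m>k$ or a sparse graph the statement breaks. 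A secondary subtlety: the ``standard fact'' that the last arrival is uniform over $B_t$ holds conditioned on $B_t$ alone, but you need it conditioned on $(B_t,U_{t-1})$, and that is not automatic --- it is exactly the inductive hypothesis (all $(t-1)$-subsets $B_t\setminus\{i\}$ are equally likely to have produced $U_{t-1}$) that licenses it, so the induction must be set up to deliver this before it is used. With the completeness/$k\geq m$ assumption added and the induction ordered carefully, your plan closes; without them it does not.
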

As a consequence, the probability that any single item $j\in J$ is still available can be stated only with regard to the current step $t$ of the mechanism, making it considerably easier to track.

In contrast to previous work in this surrounding, our mechanism handles bidders with complex private information (as opposed to single values)  while the feasible edge sets are also described by a non-trivial combinatorial structure. To our knowledge, it is the first constant-competitive mechanism for such an online problem beyond the aforementioned threshold-price mechanisms, and models with considerably restricted valuations. Still, truthfulness here does not lead to any deterioration in the competitive guarantee compared to the optimal online algorithm and lower bound. 

\section{The Mechanism}
The mechanism we propose follows simple guidelines common also to other secretary-type problems. First, we wait for a fixed number $k$ of steps and only observe the reported edge weights without making any assignments. This is called the sampling phase. With the arrival order being uniformly at random and the total number $n$ of bidders known beforehand, the first $k$ steps give, in expectation, a good overview of what kind of valuations to expect: When $k$ is some constant fraction of $n$, e.g. the chance of seeing the largest overall bid during this sampling phase is also constant.
Second, we choose to add edges to our matching according to an offline optimum over the set of arrived bidders and yet unassigned items. This ensures, together with the sampling phase, that the expected weight of any added edge be considerably large with respect to the overall offline optimum. All in all, we get 
Algorithm\nobreakspace \ref{TheMechanism1}
(disregarding the strategic component for now).
\begin{algorithm}
\caption{}
\label{TheMechanism1}
\begin{itemize}
\item Initially, set $M=\emptyset$, $I_0=\emptyset$, and $J_U=J$.
\item For steps $t= 1,\dots, k$: \\ Observe valuations $w(i_t, j)$ for all $j\in J$, reported by the arriving bidder $i_t$. \\ Set $I_t=I_{t-1}\cup\{i_t\}$.
\item For steps $t=k+1,\dots, n$: 
	\begin{itemize}
	\item[] Observe valuations $w(i_t,j)$ for all $j\in J$ reported by the arriving bidder $i_t$.
	\item[] Set $I_t= I_{t-1}\cup\{i_t\}$.
	\item[] \textbf{If there exists $j^*\in J_U$ such that $(i_t, j^*)\in M_{OPT}(I_t, J_U)$:}
	\begin{itemize}
	\item[] \textbf{assign $i_t$ the item $j^*$}
	\item[] and set $M=M\cup \{(i_t, j^*)\}$, $J_U= J_U\setminus \{j^*\}$.
	\end{itemize}
	\end{itemize}
\end{itemize}
\end{algorithm}

So after the sampling phase, the algorithm simply assigns each arriving bidder $i_t$ the item (if any) that would have been assigned to him in the optimal matching over all known bidders and yet unassigned items. From here on, let us denote the partial optimum $M_{OPT}(I_t,J_U)$ determining the possible assignment in step $t$ of the mechanism simply as $M_t$.
To turn this into a truthful mechanism, we employ a payment scheme using the aforementioned VCG prices. As long as each partial offline optimum computed by the algorithm is unique (and we make sure it is by tie-breaking), the mechanism 
(Algorithm\nobreakspace \ref {TheMechanism2})
makes exactly the same assignments as 
Algorithm\nobreakspace \ref {TheMechanism1}.

\begin{algorithm}
\caption{}
\label{TheMechanism2}
\begin{itemize}
\item Initially, set $M=\emptyset$, $I_0=\emptyset$, and $J_U=J$.
\item For steps $t= 1,\dots, k$: \\ Observe valuations $w(i_t, j)$ for all $j\in J$, reported by the arriving bidder $i_t$. \\ Set $I_t=I_{t-1}\cup\{i_t\}$.
\item For steps $t=k+1,\dots, n$: 
	\begin{itemize}
	\item[] Set $I_t= I_{t-1}\cup\{i_t\}$.
	\item[] \textbf{Set item prices $p_j(t)=OPT(I_{t-1}, J_U) - OPT(I_{t-1}, J_U\setminus \{j\})$ for all $j\in J_U$.}
	\item[] Observe valuations $w(i_t,j)$ for all $j\in J$, reported by the arriving bidder $i_t$.
	\item[] \textbf{If there exists $j\in J_U$ with $w(i_t,j)\geq p_j(t)$:}
	\begin{itemize}
	\item[] \textbf{assign $i_t$ the item $j^*\in J_U$ such that $w(i_t,j^*) - p_{j^*}(t)$ is maximized}
	\item[] and set $M=M\cup \{(i_t, j^*)\}$, $J_U= J_U\setminus \{j^*\}$. 
	\end{itemize}
	\end{itemize}
\end{itemize}
\end{algorithm}

The computation of each partial optimum $M_t$ now is implicit, and replaced by the computation of the prices, which are determined via the \emph{value} of a partial optimum without the current bidder (and how much it deteriorates when any item $j$ is taken out). 
Note that these are exactly the VCG prices for $M_t=M_{OPT}(I_t, J_U)$. Therefore, the existence of an item with nonnegative utility for the arriving bidder $i_t$ under these prices is equivalent to $i_t$ receiving some item in $M_t$. Moreover, the item which maximizes this utility is the unique item assigned to $i_t$ in $M_t$.

\paragraph{Feasibility and Truthfulness}
Since each item is removed from $J_U$ on assignment, and each bidder is considered only once on arrival, the resulting $M$ is clearly a matching: No vertex can ever be assigned more than once. 
The latter, mechanism formulation also ensures truthfulness. This is due to the simple fact that all prices are fixed before a bidder even reveals his valuations: Obviously, what he reports has no influence whatsoever on what he has to pay for any item. Therefore, the best he can hope for is to receive an item which maximizes his utility under those prices (if said utility is nonnegative). The mechanism's assignment rule does give him exactly this, i.e. he gets his favourite choice when reporting the truth. 

\vspace{5mm}
Note that our Algorithm\nobreakspace \ref {TheMechanism1} differs from the $e$-competitive algorithm of Kesselheim et al.\cite{KesselheimRTV13} only in the local optima computed: They consider an optimal matching on all known bidders and \emph{all} items, while we only take into account those items that have not yet been assigned.
Interestingly, on the one hand, their method appears incompatible with truthfulness since it denies a bidder any assignment if his \squote{correct} item is no longer available (thus incentivizing him to misreport his valuation for such items to be very low).
On the other hand, our method appears incompatible with their elegant technique of analyzing the algorithm's behavior in a backward fashion because early assignments affect later local optima in a seemingly uncontrollable way.
Consequently, we employ other, new means to show our result, which needs a little preparation.

\section{The Mechanism's Independency Property}
\subsection{Assumptions and Definitions}
For our competitive analysis, we start with proving a formalized version of Lemma \ref{informalindep}, for which we
make a number of assumptions:
\begin{itemize}
\item The graph $G$ is complete, i.e. the edge set $E$ equals $I\times J$. This is essentially the assumption of \emph{free disposal} common to mechanism design: Any bidder would be willing to accept any item for a price of zero, since he can just throw it away.
\item The length $k$ of the sampling phase is at least $m=|J|$. Together with the first assumption, this implies that in each computation of a partial optimum during the mechanism, the according matching $M_t$ actually assigns each item (i.e., $|M_t|=|J_U|$). Note this also implies that the number of bidders $n$ be sufficiently large. This fact, however, poses no restriction to the instances we can handle: For any instance $(I, J)$ with $n=|I|$ insufficiently small, we instead consider $(I\cup I', J)$, where $I'$ consists of $n'$ many \emph{dummy} bidders with valuation $0$ for all items. Now, for each $i'\in I'$, choose a number $t_{i'}$ from $\{1,\dots, n+n'\}$ uniformly at random (without replacement), and in each according step $t_{i'}$ of the mechanism, present it with the bidder $i'$. In the remaining $n$ \emph{unreserved} steps, the bidders in $I$ will be arriving in random order, as before. With this, $(I\cup I')$ as a whole is considered in random order and the mechanism runs exactly as it would have if $(I\cup I', J)$ had been an original problem instance.
Obviously, an optimal matching in $(I\cup I', J)$ has the same weight as one in $(I,J)$, so the expected competitive ratio is still with regard to the same optimal value.
\item We employ a consistent tie breaking that ensures each matching computed during the mechanism to be \emph{unique}. This can be done, for example, via an arbitrary but fixed order on the sets $I$ and $J$, implying that identifiers/names of the bidders have to be known beforehand (nonanonymous mechanism). Alternatively, one could also assume existence of a lexicographic order on the bidders (restricting oneself to instances where no two bidders have the exact same type). Other types of tie-breaking are in principle also possible. However, perturbation of the edge weights would result in only approximate truthfulness, and randomization of the chosen assignments in a need to considerably adjust our proofs. 
\end{itemize}
Our analysis heavily relies on statements about the mechanism's situation after a certain step $t$, for which we introduce the following notation.

\begin{Definition} We define the tuple $(I_t, J_U)$ with $I_t\subseteq I$, $J_U\subseteq J$ of the bidders arrived and the items yet unassigned after step $t$ of the mechanism as the mechanism's \emph{state} after step $t$.
\end{Definition}
Note that the state after step $t$ together with the arriving bidder in step $t+1$ defines the mechanism's state after step $t+1$: The mechanism will remove the item that is assigned to the last arriving bidder in $M_{t+1}=M_{OPT}(I_{t+1}, J_U)$ from the available set. As $M_{t+1}$ is the unique optimum in the given subgraph, the order in which all the vertices in $I_t$ arrived is irrelevant for the result.

\begin{Definition} We define $\mathcal{S}(t, s)$ as the set of all possible mechanism states with $t$ arrived bidders and $s$ available items:
\[ \mathcal{S}(t,s) = \left\{(I_t, J_s)\,\, |\,\, I_t\subseteq I, J_s\subseteq J, |I_t|=t, |J_s|=s\right\}\]
\end{Definition}\

\begin{Definition} Let $T$ be a table or matrix with $n$ rows and $n!$ columns. We say each row $T(t,\cdot)$ of $T$ depicts the $t$-th step of the mechanism, and each column $T(\cdot, l)$ a specific choice $\pi_l$ for the random order $\pi$. We define each entry $T(t,l)$ to contain the state $(I_t, J_U)$ of the mechanism after $t$ steps under the assumption that the random arrival order is $\pi_l$. 
\end{Definition}
This means, $T$ lists all possible (due to choice of $\pi$), different runs of the mechanism together with all states reached in the respective steps, see 
Table\nobreakspace \ref {table:StatesForAllRandomOrders}. 
As tracking probabilities for certain events along the arrival of bidders seems hard, especially with all the dependencies on already-assigned items etc., we will avoid doing so directly.
Instead, we will use $T$ to punctually analyze what happens during the mechanism.
\begin{table}
\begin{tabu} to \columnwidth {V{3}c|cV{3}cX[c]|c|X[c]cV{3}}
\hlineB{3}
\multicolumn{2}{V{3}cV{3}}{} & \multicolumn{5}{cV{3}}{random order}\\
\cline{3-7}
\multicolumn{2}{V{3}cV{3}}{} & $\pi_1$ & $\cdots$ & $\pi_l$ & $\cdots$ & $\pi_{n!}$\\
 \hlineB{3}
& $1$ & $\ddots$ & &$\vdots$ & & $\iddots$ \\
& \multirow{2}{*}{$\vdots$} & & & & & \\
& & & & & & \\
\cline{2-7}
step & $t$ & $\cdots$ & & $S(t,l)$ & & $\cdots$ \\
\cline{2-7}
& \multirow{2}{*}{$\vdots$} & & & & & \\
& & & & & & \\
& $n$ & $\iddots$ & & $\vdots$ & & $\ddots$\\
 \hlineB{3}
\end{tabu}
\caption{The algorithm's states for all random orders.}
\label{table:StatesForAllRandomOrders}
\end{table}

The connection between $T$ and the probability distribution over $\pi$ is straightforward: The likeliness of a situation after a certain step $t$ equals the fraction of entries in $T(t,\cdot)$ that depict it.
For example, if a certain state appears in a certain fraction of all $n$-th row entries, this directly corresponds to the chance the mechanism will end in that state.
We will use similar arguments as this one to get a grasp on the important developments throughout all possible mechanism runs.
Note again that $T(t,l)$ is well-defined for all $t\in \{1,...,n\}$ and all $l\in \{1,..., n!\}$ due to our use of tie breaking and the uniqueness of $M_t$. 

To gain some overview of what is happening during the mechanism, we will try to analyze how often each possible state $(I_t, J_U)$ appears in each row $T(t,\cdot)$ of $T$.

\begin{Definition}
For some state $S$ of the mechanism, we call the number of different tuples $(t,l)$ with $T(t,l)=S$ the multiplicity of state $S$ in $T$.  Shortly, we denote this number as $mul(S)$.
\end{Definition}

\begin{Definition} We define $\pi^t$ as the first $t$ bidders/choices of $\pi$.
\end{Definition}

\begin{Definition} If there exists a choice $\pi_l$ for $\pi$ such that state $S'$ of the mechanism is reached after step $t$, and state $S$ after step $t+1$ of the mechanism for arrival order $\pi_l$, we call $S'$ a predecessor of $S$. We define $P(S)$ as the set of all predecessors of $S$.
\end{Definition}

\subsection{The Independency Lemma}
With above preparations, we prove a concise version of our informal independency statement from above (Lemma \ref{informalindep}), which is central to our analysis. It states that in each row of $T$, each combination of exactly  $t$ bidders and $s$ items appears equally often.
To be more exact, we prove that for any fixed input graph, the number of different random orders $\pi$ for which the mechanism reaches a certain state $S\in \mathcal{S}(t,s)$ is  a function of only $t$ and $s$.
The proof is via a purely combinatorial analysis of the entries in $T$.

\begin{lemma}
\label{lem:independency}
For all  
$t\leq n$, $s\leq m$: Let $S, S' \in \mathcal{S}(t,s)$. Then, $mul(S) = mul (S')$.
\end{lemma}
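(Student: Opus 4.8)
The plan is to prove the stronger statement that $mul(S)$ is a function of $t$ and $s$ alone — call it $mul(t,s)$ — by induction on the step index $t$; the lemma is then immediate. The base case is the sampling phase: for every $t\le k$ no item is ever removed, so the only reachable states have $s=m$, and a state $(I_t,J)$ is reached exactly by those orders whose first $t$ arrivals form the set $I_t$. There are $t!\,(n-t)!$ of them, independently of which $t$-subset $I_t$ is chosen, so $mul$ is constant on $\mathcal{S}(t,m)$ and zero on $\mathcal{S}(t,s)$ for $s<m$.

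For the inductive step (now $t\ge k$) I would first record the averaging observation that drives everything. Reaching a state $S'=(I_t,J''_U)$ after step $t$ constrains only the first $t$ arrivals; the $n-t$ remaining bidders fill the later positions in an arbitrary order, so among the $mul(S')$ orders reaching $S'$ each not-yet-arrived bidder sits in position $t+1$ in exactly a $1/(n-t)$ fraction of them. Since the state after step $t+1$ is a deterministic function of $S'$ and of the bidder arriving in step $t+1$ (the mechanism removes from $J''_U$ whatever item $M_{OPT}(I_{t+1},J''_U)$ assigns to that bidder, if any), each predecessor $S'\in P(S)$ of a state $S=(I_{t+1},J'_U)$ is continued into $S$ by the single bidder in $I_{t+1}\setminus I_t$ and thus contributes exactly $mul(S')/(n-t)$ orders. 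This gives the recursion
\[ mul(S)=\frac{1}{n-t}\sum_{S'\in P(S)} mul(S'). \]

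It remains to show the right-hand side is the same for all $S=(I_{t+1},J'_U)\in\mathcal{S}(t+1,s)$, for which I would classify the predecessors by what the arriving bidder receives. If he receives nothing, the predecessor is $(I_{t+1}\setminus\{i\},J'_U)\in\mathcal{S}(t,s)$ with $i$ unmatched in $M_{OPT}(I_{t+1},J'_U)$; by the complete-graph assumption and $t+1>m\ge s$ this optimum saturates all $s$ items, so there are exactly $(t+1)-s$ such bidders. If he receives an item, the predecessor is $(I_{t+1}\setminus\{i\},J'_U\cup\{j^*\})\in\mathcal{S}(t,s+1)$ with $j^*\notin J'_U$ and $(i,j^*)\in M_{OPT}(I_{t+1},J'_U\cup\{j^*\})$; since this (unique) optimum saturates all $s+1$ items, each of the $m-s$ already-used items $j^*$ is matched to exactly one bidder and hence yields exactly one predecessor, for $m-s$ in total. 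Applying the inductive hypothesis to the two layers turns the recursion into
\[ mul(S)=\frac{((t+1)-s)\,mul(t,s)+(m-s)\,mul(t,s+1)}{n-t}, \]
which is independent of the particular state $S$ and closes the induction.

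The main obstacle is exactly this predecessor count, and its resolution is the surprising heart of the lemma: a priori the number of preimages could depend on the reported weights, yet the counts $(t+1)-s$ and $m-s$ are weight-free. I would have to lean carefully on three ingredients to make this rigorous — the complete-graph (free-disposal) assumption, which forces every partial optimum to saturate the item side so that the number of unmatched bidders is fixed by cardinalities alone; the tie-breaking rule, which makes each $M_{OPT}$ unique so that every already-used item has a single preimage bidder; and the determinism of the transition, which guarantees that no predecessor is double-counted or missed. I would also check the boundary transition from step $k$ to step $k+1$ separately, where only sampling-phase states (with all items available) can precede the first assignment, confirming that the received-an-item layer correctly collapses to $s+1=m$ there.
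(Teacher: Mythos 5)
Your proposal is correct and follows essentially the same route as the paper's own proof: induction over the steps, with the key point being that every state in $\mathcal{S}(t+1,s)$ has exactly $(t+1)-s$ predecessors in $\mathcal{S}(t,s)$ (unmatched arriving bidder) and $m-s$ in $\mathcal{S}(t,s+1)$ (one per already-assigned item), each contributing a $1/(n-t)$ fraction of its multiplicity. Your write-up is in fact slightly more careful than the paper's on the recursion formula and the boundary at step $k+1$, but there is no substantive difference in approach.
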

\begin{proof}
As both states $S, S'$ have exactly $t$ already arrived bidders, they both occur only in row $T(t,\cdot)$, or in other words: They both can only be the mechanism's state right after step $t$. 
Note that due to $m\leq k$, all items are assigned in any partial optimum $M_t$ computed after the $k$-th step. Also, remember that $M_{t}$ is unique, and therefore each mechanism step defines a function on the tuple of known bidders and leftover items, together with the choice of the last bidder. 
We show Lemma \ref{lem:independency} via induction over the steps of the mechanism. Let us first have a look at the case $t\leq k$.
In this phase of the mechanism, no items are assigned yet, and therefore the only possible set of leftover items is $J$. 
This reduces the lemma's statement to all sets of $t$ bidders arriving first in equally many $\pi_l$, which is true due to definition.
Now, we assume that $t \geq k+1$, so in the previous step of the mechanism, there might have been an item assigned.\\
For this phase of the mechanism, let us analyze how one row $T(t-1,\cdot)$ results in the subsequent row $T(t,\cdot)$.
We distinguish two cases and analyze for each of them the last step the mechanism made to reach some state $S=(I_t, J_s)\in \mathcal{S}(t,s)$:
\begin{enumerate}
\item In step $t$, there has been no item assigned to the arriving bidder $i_t$. Then, we have that all possible predecessors come from the set
\begin{align*} P_1'(S) =& \left\{ S_i \in \mathcal{S}(t-1,s)\, |\, \condtwocol{\right. \\ & \left.} \exists i\in I_t:\, S_i = (I_t\setminus\{i\}, J_s)   \right \} \end{align*}
By assumption, no item was assigned in step $t$, so only those bidders $i\in I_t$ define a predecessor $S_i$ that get no assignment in the matching $M_t=M_{OPT}(S)$. Let $P_1(S)$ be those $S_i \in P_1'(S)$ for which $i$  fulfills this property. Then, $|P_1(S)|= t-s $ is the number of possible predecessors for this case (as in each $M_t$, all items available are also assigned).
\item In step $t$, some item $j^*$ has been assigned to the arriving bidder. Then, all predecessors must be from the set
\begin{align*}
P_2(S) =& \left\{ S_j\in \mathcal{S}(t-1, s+1)\, |\, \exists j\in J\setminus J_s:\right .\\
&\qquad\left . \, S_j = (I_t\setminus \{i_{M_{OPT}(S)}(j)\}, J_s\cup \{j\})   \right\}
\end{align*}
Here, $i_{M_{OPT}(S)}(j)$ denotes the bidder $i$ which is assigned item $j$ in the matching $M_t= M_{OPT}(S)$.
There are $m-s$ items possible for the role of $j$, and the choice of $j$ also fixes that of the bidder that arrives in step $t$: It has to be the bidder $i_{M_{OPT}(S)}(j)$ who is assigned $j^*$ in the matching $M_{OPT}(S)$.
 Therefore, it holds that $|P_2(S)| = m-s$ for the set of predecessors in this case.
\end{enumerate}
There are no other ways to reach state $S$, so we have $P(S) = P_1(S)\cup P_2(S)$.
Note that all predecessors in the first case are in the same class $\mathcal{S}(t-1,s)$ and all predecessors in the second case are in the same class $\mathcal{S}(t-1,s+1)$.
Now, for any of the above cases, each entry of a predecessor in $T$ leads to an entry of $S$ in $T$ if and only if the next bidder to arrive after this column's $\pi_l^{t-1}$ is exactly the bidder $i\in I_t$ who is not also in $S_i$ (or $S_j$, respectively)\footnote{By Def. of $T$, each $\pi^{t-1}$ leading to a state $S_i$ or $S_j$ occurs in $T$ for every possible choice of $\pi$ that contains $\pi^{t-1}$.}. 
This happens for exactly one of the $n-(t-1)$ possible ways to extend $\pi_l^{t-1}$ towards some $\pi_l$, which is, for a $\frac{1}{n-(t-1)}$-fraction of $mul(S_i)$ (or $mul(S_j)$).
The situation for a predecessor $S_i$ is depicted in 
Table\nobreakspace \ref {table:TransitionToNextStep}.

\begin{table*}
\begin{tabu} to \textwidth {V{3}c|cV{3}cX[c]cX[c]cV{3}}
\hlineB{3}
\multicolumn{2}{V{3}cV{3}}{}  & \multicolumn{5}{cV{3}}{random order}\\
\cline{3-7}
\multicolumn{2}{V{3}cV{3}}{} & $\pi_1 \cdots$ & $\pi_{l_1}$& $\cdots$ & $\pi_{l_2}$ & $\cdots\pi_{n!}$\\
\hlineB{3}
& 1 & & & & & \\
&$\vdots$ & & & & & \\
& $t-1$ & & \ubox{$S_i$} & & \ubox{$S_i$} & \\
step & & &$\big\Downarrow i^*= i$ & &$\big\Downarrow i^*\neq i$ & \\
& $t$ & & \ubox{$S$} & & \ubox{$S' \neq S$} & \\
&$\vdots$ & & & & & \\
& $n$ & & & & & \\
\hlineB{3}
\end{tabu}
\caption{Transition from step $t-1$ to step $t$.}
\label{table:TransitionToNextStep}
\end{table*}

Also, by our induction's assumption, $mul(S_i)$ is the same for all choices of $i\in I_t$, and $mul(S_j)$ is equal for any $j\in J\setminus \{J_s\}$. 
In conclusion, the multiplicity of state $S \in \mathcal{S}(t,s)$ is as follows:
\begin{align}
\label{eq:mul}
mul(S) =& \left(   \frac{t-s}{n-(t-1)}\cdot mul(S_i)\right)\condtwocol{\\ & }\left(  \frac{m-s}{n-(t-1)} \cdot mul(S_j) \right)
\end{align}
This multiplicity holds for any choice of $S\in \mathcal{S}(t,s)$, proving Lemma \ref{lem:independency}.

\end{proof}
At first, it seems surprising that no matter the known bidders, each item is assigned with the same probability. One might think that a reasonable algorithm should increase the probability to assign an item early in the cases where a very high bid on it is presented. However, this is not true: The numerical value of any bid can be arbitrarily high - still, the best any algorithm can ever know about it stays in the form of \emph{\squote{this is the best bid for the item among a random sample of size $t$ out of $n$}}. Loosely spoken, no matter how high any bid on $j$ is in the first half of bidders, there is still a $1/2$ chance a better bid will be presented in the second half. This fact is a consequence of the arrival model itself: So as long as weights are unrestricted, we cannot settle for the good bid in the first half too often, no matter how great it might look.

As we will see, the proven independency of the two sets is a more than handy fact when one is trying to get a bound on the competitive ratio. Note that for the Independency Lemma \ref{lem:independency} to work, our assumptions from above were indeed necessary: Only with uniqueness of the partial optima, we could define how many appearances of any predecessor do lead to an entry of our fixed state $S$. Also, if there were more items in our problem than bidders are sampled ($m>k$), or the graph had too few edges, it would be possible that the same items remain unassigned for most choices of $\pi$ in the first steps after step $k$ (if all bidders prefer more or less the same ones), which would also be in contradiction to the lemma's statement.
Justified by the Independency Lemma \ref{lem:independency}, we can denote the multiplicity of any state in the same class $\mathcal{S}(t,s)$ as $mul(t,s)$ and express the total number of entries in $T$ with $t$ bidders and $s$ items as follows:
\begin{Definition} 
\label{defn:Nrs}
Define $N_{t,s}=mul(t,s)\cdot |\mathcal{S}(t,s)|$.
\end{Definition}

\section{Implications and Proof Outline}
The above independency property guarantees the following:
No matter which bidders have arrived before, each item $j\in J$ is still available with the same, fixed probability when bidder $i_t$ presents himself. Now, if we can show this fixed probability to be sufficiently large, this directly implies that any bidder will \squote{usually} still be offered the \emph{right} item (which he should get in an overall offline optimum) at the time he arrives -- which hopefully, he will then be able to buy. Following this line of thought, the first part of our proof of the competitive ratio is to ensure a certain \textbf{availability} of the items.
Indeed, we will prove the mechanism does sell out slowly enough to make this happen, leaving us with one other problem to address: It is impossible to know during the mechanism which item \emph{is} the \emph{right} one for the current bidder.
We must therefore settle for something reasonably close and, as second part of the proof, ensure the actual \textbf{weight of the chosen edges} is high enough. Recall that we chose to make assignments during the mechanism according to a \emph{partial} offline optimum on the available items and all arrived bidders. This will turn out to be a good choice because of a simple and folklore fact:

\begin{lemma}
\label{partopt}
For $I'\subseteq I$ and $J'\subseteq J$ with $|I'| = c_1\cdot n$ and $|J'|=c_2\cdot m$ chosen uniformly at random from $I$ and $J$, it holds that 
\[\mathbb{E}[OPT(I',J')]\geq c_1\cdot c_2 \, OPT(I,J)\]
\end{lemma}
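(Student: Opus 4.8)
The plan is to avoid computing $OPT(I',J')$ directly and instead exhibit a concrete matching inside the induced subgraph on $(I',J')$ whose expected weight already meets the bound; optimality of $OPT(I',J')$ then finishes the argument. The natural candidate is the restriction of a fixed global optimum to the sampled vertices.

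Concretely, first I would fix a maximum-weight matching $M^{*} := M_{OPT}(I,J)$ of the full graph, so that $\sum_{(i,j)\in M^{*}} w(i,j) = OPT(I,J)$. For each edge $(i,j)\in M^{*}$, let $X_{ij}$ be the indicator of the event that both endpoints survive the sampling, i.e. $i\in I'$ and $j\in J'$. The key observation is that the set of surviving edges $\{(i,j)\in M^{*} : X_{ij}=1\}$ is again a matching — it is a subset of the matching $M^{*}$, and discarding vertices cannot create new conflicts — and it consists only of edges present in the induced subgraph on $(I',J')$. Hence, deterministically for every outcome of the sampling, $OPT(I',J') \ge \sum_{(i,j)\in M^{*}} X_{ij}\, w(i,j)$.

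Next I would take expectations and apply linearity. Since $I'$ is a uniformly random $c_1 n$-subset of $I$, each fixed vertex $i$ lies in $I'$ with probability $|I'|/|I| = c_1$; likewise each fixed $j$ lies in $J'$ with probability $c_2$. Because $I'$ and $J'$ are drawn independently, $\Pr[X_{ij}=1] = c_1 c_2$ for every edge of $M^{*}$. Therefore
\[
\mathbb{E}[OPT(I',J')] \;\ge\; \sum_{(i,j)\in M^{*}} \Pr[X_{ij}=1]\, w(i,j) \;=\; c_1 c_2 \sum_{(i,j)\in M^{*}} w(i,j) \;=\; c_1 c_2\, OPT(I,J),
\]
which is precisely the claim.

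There is no real obstacle here, only one subtlety to respect: the survival events of two distinct edges of $M^{*}$ are correlated (the subsets have fixed sizes, and two edges may share no endpoints but still compete for the same sampling slots), so one must resist reasoning about the surviving matching jointly. Linearity of expectation sidesteps this entirely, since it uses only the marginal survival probability of each individual edge, and those marginals factor as $c_1 c_2$ by independence of the two samples. The only modelling point worth noting explicitly is that $c_1 n$ and $c_2 m$ should be integers so that uniform subset sampling is well-defined and the marginal inclusion probabilities are exactly $c_1$ and $c_2$.
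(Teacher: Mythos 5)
Your proof is correct and follows essentially the same route as the paper's: both lower-bound $OPT(I',J')$ by the surviving edges of a fixed global optimum $M_{OPT}$ and apply linearity of expectation with the marginal survival probability $c_1 c_2$ per edge. Your write-up is simply more explicit about why linearity sidesteps the correlations between edges and about the integrality of the sample sizes.
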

\begin{proof}
We bound $OPT(I',J')$ from below via a subset of the edges in a fixed, overall optimal matching $M_{OPT}$.
For any edge $(i,j)\in M_{OPT}$, the probability that $i\in I'$ holds is exactly $c_1$, and the probability that $j\in J'$ is exactly $c_2$.
In conclusion, the probability for edge $(i,j)$ to be in the induced subgraph defined by $ (I',J')$ is $c_1 \cdot c_2$.
This holds for any $(i,j)\in M_{OPT}$, proving Lemma \ref{partopt}.
\end{proof}
The Independency Lemma \ref{lem:independency} guarantees that indeed, in each state $S=(I_t, J_s)$ during the mechanism, both sets can be considered chosen uniformly at random.
So as a last step, we will put together the facts that all items are still available with the same, \emph{good} probability and that the weight of the edges we \emph{choose} is reasonably close to that of the edges we \emph{should} choose, yielding a competitive ratio of $e$.
This almost stunning methodical simplicity of the proof is a direct result to our observation of the independency property.

\section{Proof of the Main Theorem}
\subsection{Expected Availability}
With this outline in mind, let us move towards actually proving Theorem~\ref{mainthm}.
We begin in the first part with showing that the mechanism does not sell out too fast. To capture this fact, we strive to determine the expected number of items still available after step $t$, $\mathbb{E}[s_t]$. 

$\mathbb{E}[s_t]$ can be expressed as the sum over the probabilities that each single item is still unassigned, i.e.
\[\mathbb{E}[s_t] = \sum_{j\in J} \Pro{j \text{ is unassigned after step }t}\enspace .\]
This probability, in turn, is the same for all items due to the Independency Lemma \ref{lem:independency}.
We prove the following statement:
\begin{lemma}
For any $j\in J$, $k<t\leq n$:
\[ \Pro{j \text{ is unassigned after step }t}= \frac kt\enspace .\]
\end{lemma}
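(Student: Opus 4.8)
The plan is to reduce the statement to a computation of the expected number $\mathbb{E}[s_t]$ of still-available items and then to exploit Lemma~\ref{lem:independency} twice. First, because by the Independency Lemma every state in a class $\mathcal{S}(t,s)$ has the same multiplicity, the probability that a \emph{fixed} item $j$ is available after step $t$ does not depend on $j$: the number of entries of the row $T(t,\cdot)$ whose state contains $j$ equals $\sum_s mul(t,s)\binom{n}{t}\binom{m-1}{s-1}$, which is manifestly the same for every $j\in J$. Hence $\Pro{j \text{ unassigned after step }t}=\mathbb{E}[s_t]/m$, and it suffices to show $\mathbb{E}[s_t]=mk/t$.

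Next I would set up a recurrence for the aggregate counts $N_{t,s}$ of Definition~\ref{defn:Nrs}. Substituting the multiplicity recurrence~\eqref{eq:mul} into $N_{t,s}=mul(t,s)\,|\mathcal{S}(t,s)|$ and using $|\mathcal{S}(t,s)|=\binom{n}{t}\binom{m}{s}$ together with the elementary identities $\binom{n}{t}/(n-t+1)=\binom{n}{t-1}/t$ and $(m-s)\binom{m}{s}/\binom{m}{s+1}=s+1$, all factors involving $n$ cancel and one is left with the clean recurrence
\[ t\,N_{t,s}=(t-s)\,N_{t-1,s}+(s+1)\,N_{t-1,s+1}. \]
The whole point of passing from $mul$ to $N_{t,s}$ is that this scaling removes the dependence on $n$ and on the number of states in a class, leaving a two-variable recurrence with integer coefficients that can actually be summed.

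Then I would extract $\mathbb{E}[s_t]$ by taking the $s$-weighted sum. Since every column contributes exactly one row-$t$ entry we have $\sum_s N_{t,s}=n!$ and $\mathbb{E}[s_t]=\frac1{n!}\sum_s s\,N_{t,s}$. Multiplying the recurrence by $s$, summing over $s$, and re-indexing the second term by $s\mapsto s+1$ makes the quadratic terms in $s$ collapse: the coefficient of $N_{t-1,s}$ becomes $s(t-s)+s(s-1)=s(t-1)$. This yields $t\sum_s s\,N_{t,s}=(t-1)\sum_s s\,N_{t-1,s}$, i.e. $\mathbb{E}[s_t]=\frac{t-1}{t}\,\mathbb{E}[s_{t-1}]$. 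With the base case $\mathbb{E}[s_k]=m$ (no item is assigned during the sampling phase, so $s_k=m$ deterministically), the product telescopes to $\mathbb{E}[s_t]=m\prod_{\tau=k+1}^{t}\frac{\tau-1}{\tau}=mk/t$, and dividing by $m$ gives the claim.

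The main obstacle is the middle step: carefully checking that inserting~\eqref{eq:mul} really does reduce to the clean form above, i.e. that the $1/(n-t+1)$ factors and all binomial ratios cancel to leave exactly the coefficients $t-s$ and $s+1$ (with out-of-range terms such as $N_{t-1,s+1}$ for $s=m$ vanishing automatically). Once that recurrence is secured, the decisive idea is the $s$-weighted summation, after which the quadratic dependence on $s$ collapses to a linear one and the telescoping is entirely routine.
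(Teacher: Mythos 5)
Your proposal is correct, but it takes a genuinely different route from the paper. The paper argues per item and per step: fixing a state $(I_t,J_{t-1})$ with $j$ still available, item $j$ is assigned in step $t$ if and only if the $t$-th arrival is the unique bidder matched to $j$ in the partial optimum $M_t$, and by the Independency Lemma \ref{lem:independency} this happens in exactly a $1/t$-fraction of the relevant columns of $T$; the survival probability then telescopes as $\prod_{l=k+1}^{t}\left(1-\frac{1}{l}\right)=\frac{k}{t}$. You instead work with the aggregate counts $N_{t,s}$ of Definition~\ref{defn:Nrs}: converting the multiplicity recurrence \eqref{eq:mul} into $t\,N_{t,s}=(t-s)N_{t-1,s}+(s+1)N_{t-1,s+1}$ (your cancellation of the binomial ratios checks out, and the out-of-range terms do vanish, including at the boundary $t=k+1$ where $N_{k,s}=0$ for $s<m$), then taking the first moment to get $\mathbb{E}[s_t]=\frac{t-1}{t}\mathbb{E}[s_{t-1}]$ and dividing by $m$ via item-symmetry, which you correctly justify from Lemma~\ref{lem:independency} as well. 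The paper's argument buys a sharper piece of information --- the per-item, per-step assignment hazard of exactly $1/t$ regardless of the item's identity, which is the intuition emphasized later in the discussion --- whereas your moment computation is more mechanical but demonstrates that the lemma is already a formal consequence of the recurrence \eqref{eq:mul} alone, and as a byproduct verifies its consistency through $\sum_s N_{t,s}=n!$. Both derivations lean on the Independency Lemma at the same two points (equal multiplicities within a class, and symmetry across items), so neither is strictly more elementary; yours trades the probabilistic insight for an algebraically self-contained telescoping.
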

\begin{proof}

Fix a set of $t$ arrived bidders $I_t$, and a set $J_{t-1}$ of available items after step $t-1$. We analyze the probability for any item $j\in J_{t-1}$ to be assigned in step $t$. 

As $(I_t, J_{t-1})$ fixes the offline partial optimum $M_t$ the mechanism computes in step $t$, it also fixes the bidder $i_{M_t}(j)$ that $j$ is tentatively assigned to. Therefore, the probability that $j$ is assigned in step $t$ by the mechanism is the probability that exactly bidder $i_{M_t}(j)$ is the $t$-th to arrive.

Let $I_{t-1}\subset I_t$ with $|I_{t-1}|+1=|I_t|$. 
As we know from the Independency 
Lemma\nobreakspace \ref {lem:independency},
each possible $S_{t-1}= (I_{t-1}, J_{t-1})$ is the mechanism's state after step $t-1$ with the same probability, or multiplicity in $T$. Now, it follows (because $\pi$ is chosen uniformly at random): In exactly $1/t$ of the runs (columns of $T$) which result in $(I_t, J_{t-1})$ as soon as the $t$-th bidder arrives, this bidder is $i_{M_t}(j)$.
Therefore, the probability for fixed $I_t$ and $J_{t-1}$ that item $j$ is assigned in step $t$ is also $1/t$.

Extending this argument to all possible choices of $(I_t, J_{t-1})$ yields that the overall probability of $j$ being assigned in step $t$ is $1/t$, which holds for all $j$ and all $t>k$.

In consequence,
\[\Pro{j \text{ is unassigned after step }t}=  \prod_{l=k+1}^t \left (1-\frac 1l\right )\]
which we can rewrite as
\[\Pro{j \text{ is unassigned after step }t}= \frac{ \prod_{l=2}^t \left (1-\frac 1l\right )} {\prod_{l=2}^k \left (1-\frac 1l\right )} \enspace.  \]
 With $\prod_{l=2}^t \left(1-\frac 1t\right )= 1/t$ for all $t\in \NN$, $t\geq 2$, we get
\begin{align*}
\Pro{j \text{ is unassigned after step }t}=& \frac{ \prod_{l=2}^t \left (1-\frac 1l\right )} {\prod_{l=2}^k \left (1-\frac 1l\right )}\condtwocol{\\}=\condtwocol{&} \frac{\frac 1t}{\frac 1k}=\frac kt \enspace.
\end{align*}

\end{proof}

Together with the Independency Lemma \ref {lem:independency}, we can conclude
\[\mathbb{E}[s_t]= \sum_{j\in J} \Pro{j \text{ is unassigned after step }t} =m\frac{k}{t}\enspace.\]

\subsection{Weight of Chosen Edges}
In this second part, we need to bound from below the expected weight of the edges our mechanism adds to the matching $M$. We start with the analysis of one single step of the mechanism, fix the number of available items and show the following lemma:

\begin{lemma}
Let $w_{t+1, \pi}$ denote the weight of the edge chosen by the mechanism in step $t+1$ for arrival order $\pi$, if any, and $0$ if no edge is chosen. Define $\Pi_{t,s}$ as the set of arrival orders $\pi$ for which exactly $s$ items are unassigned after step $t$ of the mechanism. Then,
\[ \mathbb{E}[w_{t+1,\pi}|\pi\in\Pi_{t,s}]\geq \frac {1}{n} \frac {s}{m} OPT\]

\end{lemma}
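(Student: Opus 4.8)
The plan is to reduce the statement to a single step of the mechanism by conditioning on just enough information to pin down the partial optimum computed in step $t+1$, and then to average over which bidder happened to arrive last. Concretely, I would first condition on the set $I_{t+1}$ of the first $t+1$ bidders together with the set $J_U$ of items still unassigned after step $t$ (with $|J_U|=s$, so this conditioning already lives inside $\Pi_{t,s}$). Once these two sets are fixed, the matching $M_{t+1}=M_{OPT}(I_{t+1},J_U)$ is determined and unique by our tie-breaking, and the edge the mechanism adds in step $t+1$ is exactly the edge of $M_{t+1}$ incident to the last-arriving bidder $i_{t+1}$, contributing weight $0$ if $i_{t+1}$ is unmatched in $M_{t+1}$.

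The crux, and the step I expect to be the main obstacle, is to show that conditioned on $I_{t+1}$ and $J_U$ the last-arriving bidder $i_{t+1}$ is \emph{uniformly} distributed over $I_{t+1}$. This is exactly where the Independency Lemma~\ref{lem:independency} is indispensable: a priori, conditioning on which items remain unassigned could bias which bidder came last. For each candidate $i\in I_{t+1}$, the arrival orders in which $i$ comes last and the first $t$ bidders reach the state $(I_{t+1}\setminus\{i\},J_U)$ number proportionally to $mul(I_{t+1}\setminus\{i\},J_U)$; since all these predecessor states lie in $\mathcal{S}(t,s)$, Lemma~\ref{lem:independency} forces these multiplicities to be equal, and each extends to a last bidder $i$ for a $1/(n-t)$ fraction of orders, uniformly in $i$. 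Hence every $i\in I_{t+1}$ is equally likely to be $i_{t+1}$. Since summing each bidder's matched-edge weight over all of $I_{t+1}$ counts each edge of $M_{t+1}$ exactly once, this gives the per-step identity
\[ \mathbb{E}\!\left[w_{t+1,\pi}\growingmid I_{t+1}, J_U\right]=\frac{1}{t+1}\,OPT(I_{t+1},J_U). \]

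It then remains to average over $(I_{t+1},J_U)$ under the conditioning $\pi\in\Pi_{t,s}$. By Lemma~\ref{lem:independency} the state after step $t$ is uniform over $\mathcal{S}(t,s)$, so $I_t$ and $J_U$ are independent uniform subsets of sizes $t$ and $s$; adding the uniformly random incoming bidder turns $I_t$ into a uniform $(t+1)$-subset $I_{t+1}$ that stays independent of $J_U$, because the incoming bidder's identity does not depend on the item set. I can therefore invoke the partial-optimum Lemma~\ref{partopt} with $c_1=(t+1)/n$ and $c_2=s/m$ to obtain $\mathbb{E}[OPT(I_{t+1},J_U)\mid \pi\in\Pi_{t,s}]\ge \frac{t+1}{n}\cdot\frac{s}{m}\,OPT$. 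Combining this with the per-step identity and canceling the $t+1$ factors yields
\[ \mathbb{E}\!\left[w_{t+1,\pi}\growingmid \pi\in\Pi_{t,s}\right]\ge \frac{1}{t+1}\cdot\frac{t+1}{n}\cdot\frac{s}{m}\,OPT=\frac{1}{n}\frac{s}{m}\,OPT, \]
as claimed. The only genuinely delicate point is the uniformity of the last bidder under conditioning on $J_U$; the remainder is bookkeeping together with the two cited lemmas.
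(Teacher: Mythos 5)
Your proposal is correct and follows essentially the same route as the paper's proof: use the Independency Lemma to argue that the state $(I_{t+1},J_U)$ consists of uniform, independent subsets, apply Lemma~\ref{partopt} to bound $\mathbb{E}[OPT(I_{t+1},J_U)]$ by $\frac{t+1}{n}\frac{s}{m}\,OPT$, and charge the last-arriving bidder a $\frac{1}{t+1}$ share of $M_{t+1}$. Your explicit multiplicity argument for why $i_{t+1}$ is uniform over $I_{t+1}$ conditioned on $J_U$ is a detail the paper states only implicitly ("the choice of bidder $i_{t+1}$ is also random"), but it is the same underlying reasoning.
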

\begin{proof}
The assumption that $s$ items are unassigned is equivalent to the current state of the mechanism being some $S\in \mathcal{S}(t,s)$. 
We consider the arrival orders $\pi$, or columns of $T$, for which this is the case.
By our Independency Lemma \ref {lem:independency}, we know that each such $S$ appears in equally many columns.
In consequence, over all possible choices of $\pi$, after arrival of bidder $i_{t+1}$ at the beginning of step $t+1$ we have a set of $t+1$ bidders and a set of $s$ items which are chosen uniformly at random, and independently from each other. 
In this situation, we can apply the aforementioned folklore guarantee 
(Lemma\nobreakspace \ref {partopt})
to the optimal matching $M_{t+1}$ on $(I_{t+1}, J_s)$:
\[ \mathbb{E}[w(M_{t+1})]\geq  \frac{t+1}{n}\frac{s}{m} OPT\]
As the choice of bidder $i_{t+1}$ is also random, the expected contribution of this bidder to $w(M_{t+1})$ (and therefore to our mechanism) is
\begin{align*}
	\mathbb{E}[w_{t+1,\pi}|\pi\in\Pi_{t,s}] &\geq\frac{1}{t+1} \mathbb{E}[w(M_{t+1})]\\
	&\geq \frac{1}{t+1}  \frac{t+1}{n}\frac{s}{m} OPT\\
	& = \frac{1}{n} \frac{s}{m} OPT
\end{align*}
\end{proof}
This states the expected contribution $\mathbb{E}[w_{t+1,\pi}|\pi\in\Pi_{t,s}]$ of the $(t+1)$-th step under the assumption that the number of available items is $s$. Let us proceed and determine the overall expected contribution $\mathbb{E}[w_{t+1,\pi}]$ of that step. For this, we sum up over all possible values for $s_t$, denoting the size of the set of available items after $t$ steps. Recall that $N_{t,s}$ denotes the total number of entries in $T$ with $t$ bidders and $s$ items (Definition~\ref{defn:Nrs}), i.e. $N_{t,s}=|\Pi_{t,s}|$. We state the minimum number of items available after step $t$ as $s_t^{min}=\max\{0, (m-t+k)\}$. Then it follows via definition of expectation
\begin{align*}
	\mathbb{E}[w_{t+1,\pi}] & = \frac{\sum_{s_t= s_t^{min}}^{m} \mathbb{E}[w_{t+1,\pi}|\pi\in\Pi_{t,s_t}]N_{t,s_t}}{n!}  \\
	& \geq \frac{\sum_{s_t= s_t^{min}}^{m} \frac{1}{n}\frac{s_t}{m}OPT\cdot N_{t,s_t}}{n!} \\
	& = \frac{1}{mn}OPT \frac{\sum_{s_t= s_t^{min}}^{m} s_t\cdot N_{t,s_t}}{n!}\\
	& = \frac 1{mn} OPT\cdot \mathbb{E}[s_t] \enspace .
\end{align*}

\subsection{Proof of Competitive Ratio}
As outlined before, we can now derive a bound on the overall expected welfare of the mechanism: We have to put together what we know about availability of items and contribution of the single assignments.
Let us start with summing these contributions up over all steps after the sampling phase. Then, we get for the welfare of the matching $M$ computed by our mechanism:
\[\mathbb{E}[w(M)] = \sum_{t=k+1}^{n} \mathbb{E}[w_{t,\pi}] \geq \sum_{t=k+1}^{n}\frac{\mathbb{E}[s_{t-1}]}{mn}OPT\]

We have shown $\mathbb{E}[s_t]=m\frac{k}{t}$ for all $t\geq k+1$, which we replace in above inequality.
As we have 
\begin{align*}
	\mathbb{E}[w(M)]  &\geq  \sum_{t=k+1}^{n} \frac{1}{n}\frac{\frac{mk}{t-1}}{m}OPT\\
	& 	= \sum_{t=k+1}^{n} \frac{k}{n}\frac{1}{t-1}OPT\\
	& 	= \frac{k}{n} OPT\sum_{t=k+1}^{n} \frac{1}{t-1}
\end{align*}
the further analysis boils down to a clever choice of $k$, yielding a maximum-possible fraction of the optimal welfare.
What follows is typical to the secretary literature, and can similarly be found e.g. in \cite{KesselheimRTV13}.
We choose $k=\lfloor \frac ne\rfloor$, leading to 
\begin{align*}
\mathbb{E}[w(M)]  \geq&  \frac{\lfloor \frac ne\rfloor}{n} OPT\sum_{t=\lfloor \frac ne\rfloor+1}^{n} \frac{1}{t-1}\\
 =&  \frac{\lfloor \frac ne\rfloor}{n} OPT\sum_{t=\lfloor \frac ne\rfloor}^{n-1} \frac{1}{t}  \enspace .
\end{align*}
It holds that $\sum_{t=\lfloor\frac ne\rfloor}^{n-1}\frac 1t\geq \ln (\frac{n}{\lfloor  n/e\rfloor})\geq 1$, and we get
\[ \mathbb{E}[w(M)]  \geq  \frac{\lfloor \frac ne\rfloor}{n}OPT \enspace .\]
Finally, because of $\frac{\lfloor n/e\rfloor}{n}\geq \frac 1e - \frac 1n$, the mechanism is $e$-competitive (for $n\rightarrow \infty$). This concludes the proof of Theorem \ref{mainthm}.
As even for the simple original version of the secretary problem, this ratio is best-possible, it follows that the mechanism is also optimal.

\section{Discussion and Outlook}

For our mechanism, yielding an optimal competitive ratio of $e$, the analysis relies on only two main properties: A certain, expected availability of the items and the fact that whenever an assignment does happen, the expected value generated can be bounded from below. Both works because of the Independency Lemma \ref {lem:independency}, stating that for the probability with which each item set is still available, it does not matter which bidders the mechanism has already seen.
This sounds surprising, if not unlikely, but turns out to make a lot of sense in hindsight.
Loosely spoken, the mechanism assigns item $j\in J$ if and only if the arriving bidder is \emph{the one} in the first $t$ that is assigned $j$ in an according partial optimum. The chance for this is $1/t$ without any regard to the identity of $j$ \emph{or} the arrived set of bidders. 
The independency in this sense is a direct reflection of the specific type of statement that can be deduced during any algorithm in the secretary model: On the random fraction of data we have seen, this value/assignment is the best to choose.

Considering this, the most remarkable fact lies somewhere else: That the Independency Lemma \ref {lem:independency} holds although our mechanism strays from general design principles in a potentially fatal sense.
Usually, the quality of decisions based on a known, random fraction of the data is best when these are based on \emph{all} the known data, and previous work has done just that.
Still, we choose to disregard part of what we know - weights of all the edges incident to items already assigned - and lose nothing in our guarantee. This is the circumstance inducing our need for a new proof technique in the first place, and the fact that it works is only due to the randomness in the remaining, considered part of the information. 

Based on these general insights as well as the mechanism itself, we give some initial thoughts on a few possible future directions.

\paragraph{Restricted Weights and Multiplicities}
With our initial assumption of adversarial edge weights, it suggests itself to also analyze the impact of possible restrictions on the value or structure of edge weights.
For example, the case of multiplicities $B>1$ for all items comes to mind: Assume that each item in $J$ can be sold $B$ times instead of only once. Here, one can interpret the multiplicity as $B$ individual items, and duplicate edges accordingly. In this surrounding, the mechanism could be combined with the principle of linear progression (which e.g. allows assignment of only at most $t/n \cdot B$ copies of each item up to time $t$), suggesting that it is possible to exploit certain kinds of extra knowledge given on the graph. 

\paragraph{Secretary Problems with Structured Feasible Sets}
As our problem is closely related to the secretary problem on transversal matroids, it is of course an interesting question if there are other cases of secretary problems with strong structural restrictions on their feasible sets that allow for a method in this spirit (special matroids or others). E.g., when arriving agents are elements of a partition matroid (where a set is independent or feasible if and only if it contains at most one element from each set of the partition), our mechanism can be employed to find an independent set of near-optimal weight: Sets of the underlying partition play the role of items, and are marked as unavailable as soon as an element from the set has been selected. Still, the problems our mechanism handles successfully have very specific, strict properties. Finding other such structures that are interesting, but allow for as much control over the mechanism's steps as the one-on-one assignment of the matching case could therefore prove difficult. A small example on difficulties caused by a loss of control is also provided in the next paragraph.

\paragraph{Combinatorial Auctions}
A generalization towards combinatorial scenarios is also a natural idea to follow. However: In general, as soon as assignments are no longer one-on-one, our method runs into additional problems.
When, as the easiest example, a bidder is assigned two items rather than just one in some step, the Independency Lemma \ref {lem:independency} will usually no longer hold afterwards. Even if no item was assigned more likely than any other, certain \emph{combinations} of them might be disappearing with higher probability, and therefore, be available less likely than others in subsequent steps. In other words, it is unrealistic to expect each (size $s$) subset $J'$ of all items to equally likely be available after step $t$. For example, if all bidders demand the set $\{j_1, j_2\}$, our mechanism will likely assign both these items at the same time and therefore, each $J'\subseteq J$ with $|J\cap \{j_1, j_2\}|=1$ would clearly become very unlikely.

Still, this observation does not destroy all hope for such cases. Much more, it indicates a necessity to relax the independency property as stated in this paper, and try to find a more general version hereof: To preserve the general line of the proof, it would suffice to only \emph{bound} the expected availability of items independently from the set of known bidders.

\section{Conclusion}
We have introduced a truthful mechanism for the online weighted bipartite matching problem in the secretary model with competitive ratio of $e$. This closes the gap between the previously known logarithmic-competitive mechanism and the optimal online algorithm without truthfulness. It also implies that even for problems with multi-valued private information and rich combinatorial structure, the secretary model allows for near-optimal, truthful mechanisms. This settles a question which has been open for a number of years. 

Our proof technique is based on an observation of the independency between the set of known bidders and assigned items. 
This independency reduces the proof of the competitive ratio to a few, elementary calculations. The reason for this is that it actually reflects on the mechanism's behavior as being very reasonable, making decisions only to an extent that is justified by the nature of available information due to the random arrival model.
We expect the idea to transfer nicely to certain other problems with one-on-one-assignments, including of course any special cases to our problem which might be of interest. For problems in the scope of combinatorial auctions, e.g. the case of submodular valuations, it is clear that our method (more exactly, the proof of the Independency Lemma \ref {lem:independency}) cannot directly be applied. However, it appears as if a weaker version hereof should also suffice, and we hope to identify such a suitable variation at a later point. 

Beyond this new sense of hope for the design of truthful mechanisms in the secretary model, we want to stress that the design principles reflected in this paper are purely algorithmic and could therefore prove useful even for problems where no good algorithm is known yet.

\section{Acknowledgements}
The author would like to thank Thomas Kesselheim for his valuable thoughts and insights.

\newpage

\addcontentsline{toc}{section}{Bibliography}

\appendix

\end{document}